\newcommand{\lbl}[1]{\label{#1}}
\newcommand{\arr}{\rightarrow}
\newcommand{\be}{\begin{eqnarray}}
\newcommand{\ee}{\end{eqnarray}}
\newcommand{\half}{\frac{1}{2}}
\newcommand{\ep}{\varepsilon}
\newcommand{\R}{{\mathbb R}}
\newcommand{\Rv}{{\mathbf R}}
\newcommand{\m}{{\mathbf m}}
\newcommand{\e}{{\mathbf e}}
\newcommand{\vv}{{\mathbf v}}
\newcommand{\p}{{\mathbf p}}
\newcommand{\q}{{\mathbf q}}
\newcommand{\Q}{{\mathbf Q}}
\newcommand{\K}{{\mathbf K}}
\newcommand{\M}{{\mathbf M}}
\newcommand{\X}{{\mathbf X}}
\newcommand{\bzero}{{\bf 0}}
\newcommand{\1}{{\bf 1}}
\newcommand{\weak}{\rightharpoonup}
\newtheorem{thm}{Theorem}[section]
\newtheorem{cor}[thm]{Corollary}
\newtheorem{lem}[thm]{Lemma}
\newtheorem{prop}[thm]{Proposition}
\theoremstyle{definition}
\newtheorem{defn}{Definition}[section]
\theoremstyle{remark}
\newtheorem{rem}{Remark}[section]
\numberwithin{equation}{section}
\def\def\IPEfile{#}\input{\IPEfile}1{\def\IPEfile{#1}\input{\IPEfile}}
\begin{document}
\title
{Axisymmetry of critical points for the Onsager functional}
\author{J. M. Ball}
\address{Department of Mathematics, Heriot-Watt University, Edinburgh EH14 4AS, U.K.}
\email{jb101@hw.ac.uk}

\date{\today}
\begin{abstract}
 A simple proof is given of the classical result \cite{FatkullinSlastikov2005,LiuZhangZhang2005} that critical points for the Onsager functional with the Maier-Saupe molecular interaction are axisymmetric, including the case of stable critical points with an additional dipole-dipole interaction \cite{zhouetal2007}. The proof avoids  spherical polar coordinates, instead using an integral identity on the sphere $S^2$. For general interactions with absolutely continuous kernels the smoothness of all critical points is established, generalizing a result in \cite{vollmer2016} for the Onsager interaction. It is also shown that non-axisymmetric critical points exist for a wide variety of interactions including that of Onsager.
\end{abstract}
 \maketitle

\section{Introduction}
\setcounter{equation}{0}
In this paper we study critical points of  the Onsager free-energy functional
\be 
\label{1}
I(\rho)=\tau\int_{S^2}\rho(\p)\ln\rho(\p)\,d\p+\half\int_{S^2}\int_{S^2}k(\p\cdot\q)\rho(\p)\rho(\q)\,d\p\,d\q,\ee
where $\tau=k_B\theta$, $\theta$ is the absolute temperature and $k_B$ is Boltzmann's constant, so that $\tau>0$. Here $\rho$ represents the probability of finding a rod-like molecule with orientation $\p\in S^2$ in a homogeneous nematic liquid crystal.  The first term in \eqref{1} is entropic, while the second represents the potential  energy due to interactions between pairs of molecules but can also contain an entropic component. (For more details on the physical background see, for example, \cite{p38}.)

The kernel $k:[-1,1]\to\R$ is assumed to be continuous. That   $k$ depends on $\p,\q$ only through the scalar product $\p\cdot\q$ ensures that $I$ is $O(3)$ invariant, that is $I(\rho_\Rv)=I(\rho)$ for any $\Rv\in O(3)$, where $\rho_\Rv(\p):=\rho(\Rv\p)$. Important examples  are:
\begin{eqnarray*}
 &k(t)=-\kappa t^2, &\text{(Maier-Saupe)},\\ 
 &k(t)=-\sigma t,&\text{(dipolar)},\\
 &k(t)=-(\sigma t+\kappa t^2)&\text{(coupled dipolar/Maier-Saupe)},\\
&k(t)=\kappa \sqrt{1-t^2}&\text{(Onsager)},
\end{eqnarray*}
where $\kappa>0$, $\sigma>0$ are constants. Kernels   which are even in $t$ (i.e. $k(t)=k(-t)$), such as the Maier-Saupe and Onsager kernels,  correspond to liquid crystals exhibiting statistical head-to-tail symmetry of the molecules. Note that adding a constant to $k$ only changes $I(\rho)$ by a constant, so does not affect critical points. 

Critical points of $I$ are defined in Section \ref{crit} and shown to be probability density functions $\rho$ satisfying the Euler-Lagrange equation for \eqref{1}
\be 
\lbl{EL}
\rho(\p)=\frac{\psi(\p)}{\int_{S^2}\psi(\p)\,d\p },
\ee 
where 
\be 
\lbl{psidef}\nonumber
\psi(\p)=\exp (-\tau^{-1}(k*\rho) (\p))
\ee
and
\be 
\lbl{conv}\nonumber
(k*\rho) (\p):=\int_{S^2}k(\p\cdot\q)\rho(\q)\,d\q.
\ee
The $O(3)$ invariance of $I$ implies that the set of critical points is $O(3)$ invariant, that is if $\rho$ is a critical point, so is $\ \rho_\Rv$ for any $\Rv\in O(3)$. The isotropic state $\rho_{\rm iso}(\p):=\frac{1}{4\pi}$ is a critical point for any $k$. If the kernel is absolutely continuous on $[-1,1]$ then we show (Theorem \ref{smoothness}) that all critical points are smooth, generalizing a result of Vollmer \cite{vollmer2016} for the Onsager kernel.

A probability density function $\rho$ is {\it axisymmetric} if $\rho(\p)=f(\p\cdot\e)$ for some $\e\in S^2$ and function $f:[-1,1]\to\R$. For the dipolar potential it is obvious that any critical point $\rho$ is axisymmetric because  \eqref{EL} already expresses $\rho$ as a function of $\m :=\int_{S^2}\p\rho(\p)\,d\p$
\be 
\lbl{axidipolar}\nonumber
\rho(\p)=\frac{\exp (\alpha\p\cdot\m)}{\int_{S^2}\exp (\alpha\q\cdot\m)\,d\q},
\ee
where $\alpha:=\frac{\sigma}{\tau}$.
However for the Maier-Saupe potential the axisymmetry of all critical points is not at all obvious, and was first proved independently by different methods in the fundamental papers of Fatkullin \& Slastikov \cite{FatkullinSlastikov2005} and Liu, Zhang \& Zhang \cite{LiuZhangZhang2005} (see also \cite{liu2019note}), who gave a full description of solutions in terms of the parameter $\beta:=\frac{\kappa}{\tau}$, thus providing a complete picture of the isotropic-nematic phase transition for this model. A further proof of these results was given in \cite{zhouetal2005}. 

In this paper a different and much simpler proof of the axisymmetry of critical points for the Maier-Saupe potential is given (see Theorem \ref{MSaxithm}). The idea is not to use spherical polars, because these desymmetrize the analysis by the choice of the polar axis, but rather to work in Cartesian coordinates and use an integral identity on the sphere. The method also works for stable critical points for the coupled dipolar/Maier-Saupe potential (see Theorem \ref{caxi}), recovering a result of Zhou et al \cite{zhouetal2007}. It is further shown (Theorem \ref{nonaxithm}) that non-axisymmetric critical points exist for a wide variety of kernels, including that for the Onsager interaction, the idea being to minimize $I$ among $\rho$ having cubic symmetry.

\section{Critical points}
 \lbl{crit}
We denote by
$${\mathcal P}=\{\rho\in L^1(S^2):\rho\geq 0, \int_{S^2}\rho(\p)\,d\p=1\}$$
the set of   probability density functions on $S^2$. Because of the unilateral constraint $\rho\geq 0$ and the singular behaviour of $\rho\ln\rho$ at $\rho=0$, a little care is needed in defining critical points. The definition below is meaningful without supplementary regularity hypotheses on $\rho$.
\begin{defn}
We say that $\rho\in\mathcal P$ is a {\it critical point} of $I$ if $I(\rho)<\infty$ and for some $\sigma>0$
\be 
\lbl{critpt}\limsup_{\lambda\to 0+}\frac{I(\rho+\lambda(\rho'-\rho))-I(\rho)}{\lambda}\geq 0\nonumber
\ee
for all $\rho'\in \mathcal P$ with $I(\rho')<\infty$ and $\|\rho'-\rho\|_{L^\infty(S^2)}\leq\sigma$.
\end{defn}
\begin{defn}
\lbl{localmin}
 $\rho\in\mathcal P$  is {\it an $L^\infty$ local minimizer of} $I$ if for some $\sigma>0$
$$I(\rho)\leq I(\tilde\rho)\text{  for all }\tilde\rho\in \mathcal P\text{ with }\|\rho-\tilde\rho\|_{L^\infty(S^2)}\leq\sigma.$$
\end{defn}
\noindent Thus any $L^\infty$ local minimizer of $I$ is a critical point.
\begin{rem}
The existence of at least one absolute minimizer of $I$ in $\mathcal P$ follows easily from the direct method of the calculus of variations (see Step 1 of the proof of Theorem \ref{nonaxithm}).
\end{rem}
\begin{prop}
\lbl{critprop}
$\rho\in\mathcal P$ is a critical point of $I$ if and only if $\rho$ is continuous, bounded away from zero, and satisfies
\be 
\lbl{EL1}
\tau(1+\ln\rho(\p))+(k*\rho)(\p)=C, \;\;\p\in S^2
\ee
for some constant $C$, i.e. \eqref{EL} holds.
\end{prop}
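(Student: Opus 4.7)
The plan is to prove the two implications separately. For the ``if'' direction, assume \eqref{EL1} holds. Rearranging, $\rho$ equals $Z^{-1}\exp(-\tau^{-1}(k*\rho))$ almost everywhere, and since $k$ is continuous and $\rho\in L^1(S^2)$ the convolution $k*\rho$ is continuous on the compact sphere, so $\rho$ admits a continuous, strictly positive, bounded representative. Given $\rho'\in\mathcal{P}$ with $I(\rho')<\infty$ and $\|\rho'-\rho\|_{L^\infty}\leq\sigma<\tfrac{1}{2}\min_{S^2}\rho$, the interpolant $\rho_\lambda:=\rho+\lambda(\rho'-\rho)$ stays uniformly bounded below by $\tfrac12\min_{S^2}\rho$ for $\lambda\in[0,1]$; dominated convergence together with \eqref{EL1} and $\int_{S^2}(\rho'-\rho)\,d\p=0$ then yields
$$\lim_{\lambda\to 0^+}\frac{I(\rho_\lambda)-I(\rho)}{\lambda} = \int_{S^2}\bigl[\tau(1+\ln\rho)+k*\rho\bigr](\rho'-\rho)\,d\p = C\int_{S^2}(\rho'-\rho)\,d\p = 0,$$
which verifies the critical-point inequality.

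For the converse, suppose $\rho$ is a critical point. I would first show $\rho>0$ almost everywhere. If the set $E:=\{\rho=0\}$ had positive measure, pick $n$ so large that $A:=\{\rho\geq 1/n\}$ also has positive measure (possible since $\int\rho=1$) and consider $\rho':=\rho+\mu(\chi_E/|E|-\chi_A/|A|)$ for $\mu>0$ small. Then $\rho'\in\mathcal{P}$, $\|\rho'-\rho\|_\infty\leq\sigma$, $I(\rho')<\infty$, and along $\rho_\lambda=\rho+\lambda(\rho'-\rho)$ the entropy contribution on $E$ equals $\tau\mu\ln(\lambda\mu/|E|)$, driving the difference quotient to $-\infty$ as $\lambda\downarrow 0$, while the $A$-entropy (bounded via the mean-value theorem since $\rho\geq 1/n$ there) and the quadratic interaction term remain bounded. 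This contradicts the critical-point condition, so $\rho>0$ a.e. Next, for each $0<\epsilon<N$ with $S_\epsilon^N:=\{\epsilon\leq\rho\leq N\}$ of positive measure and each $\phi\in L^\infty(S^2)$ supported in $S_\epsilon^N$ with $\int\phi=0$, the perturbations $\rho\pm\lambda\phi$ are admissible for small $\lambda$, and dominated convergence (using that $\rho$ is pinched between $\epsilon$ and $N$ on $\supp\phi$) gives
$$\int_{S^2}\phi(\p)\bigl[\tau\ln\rho(\p)+(k*\rho)(\p)\bigr]\,d\p = 0.$$
Varying $\phi$ forces $\tau\ln\rho+k*\rho$ to be a.e.\ constant on each $S_\epsilon^N$; letting $\epsilon\downarrow 0$ and $N\uparrow\infty$ and invoking $\rho>0$ a.e.\ upgrades this to \eqref{EL1} almost everywhere, and continuity of $k*\rho$ identifies $\rho$ with a continuous, strictly positive, bounded representative.

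The main obstacle is the positivity step: in the absence of any a priori $L^\infty$ bound on $\rho$, uniform perturbations such as $\rho_{\rm iso}$ need not be admissible in $L^\infty$, so the argument must localize simultaneously to the vanishing set $E$ and to an auxiliary set $A$ where $\rho$ is bounded below, exploiting the singular slope of $s\ln s$ at the origin. Once positivity is in hand, the rest reduces to standard Lagrange-multiplier computations restricted to subsets where both $\rho$ and $\ln\rho$ are bounded.
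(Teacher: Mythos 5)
Your proof is essentially correct and reaches the same conclusion, but it organizes the ``only if'' direction differently from the paper. The paper first derives, via convexity of $\eta(s)=\tau s\ln s$ and \emph{monotone convergence}, the global variational inequality
\be
\nonumber
\int_{S^2}\bigl[\tau(1+\ln\rho)+(k*\rho)\bigr](\rho'-\rho)\,d\p\geq 0
\ee
valid for all admissible $\rho'$ (with $\ln 0:=-\infty$), and then extracts both the positivity of $\rho$ and the Euler--Lagrange relation as consequences by choosing special test functions $\rho'$. You instead attack positivity head-on by computing the entropy difference quotient along $\rho_\lambda$ and letting the singular slope of $s\ln s$ at $0$ force blow-up, then prove the Euler--Lagrange relation on the truncated sets $S_\epsilon^N$ using dominated convergence. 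Both routes exploit the same underlying mechanism (the $-\infty$ derivative of $s\ln s$ at $s=0$), but the paper's formulation is more economical and is reused verbatim in Step~(ii) of the proof of Theorem~\ref{nonaxithm}, whereas your version needs a separate, hands-on estimate in the positivity step. One detail in your argument deserves tightening: your appeal to the mean-value theorem to bound the $A$-contribution is correct for the upper bound (since $\eta'(\xi)\geq\tau(1-\ln 2n)$ when $\xi\geq\tfrac{1}{2n}$), but the lower bound requires integrating $\tau(1+\ln\rho)$ over $A$, which is finite because $\int_A\ln\rho\leq\int_A\rho\leq 1$ and $\ln\rho\geq-\ln n$ on $A$; as written, ``bounded'' risks suggesting a pointwise bound that is not available when $\rho$ is unbounded above on $A$. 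Alternatively, truncating $A$ to $\{1/n\leq\rho\leq n\}$ would make the bound immediate. With that clarified, the proof is sound.
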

\begin{proof}
We use the  method of \cite{u9}, which is in turn based on \cite{p10}. We write
\be 
\lbl{Idecomp}
I(\rho)=E(\rho)+K(\rho),\nonumber
\ee
where
$$E(\rho):=\int_{S^2}\tau\rho(\p)\ln\rho(\p)\,d\p,\;\;K(\rho):=\half\int_{S^2}\int_{S^2}k(\p\cdot\q)\rho(\p)\rho(\q)\,d\p\,d\q.$$
Let $\rho,\rho'\in\mathcal P$ with $I(\rho)<\infty$, $I(\rho')<\infty$ and $\|\rho-\rho'\|_{L^\infty(S^2)}\leq\sigma$. Note that $K(\rho), K(\rho')$ are well defined and finite, so  that  $E(\rho), E(\rho')$ are finite.  Also
\be 
\lbl{Kdiff}
\lim_{\lambda\to 0+}\frac{K(\rho+\lambda(\rho'-\rho))-K(\rho)}{\lambda}=\int_{S^2}(k*\rho)(\rho'-\rho)\,d\p.
\ee
 Furthermore
\be 
\frac{E(\rho+\lambda(\rho'-\rho))-E(\rho )}{\lambda}
=\int_{S^2}\frac{\eta( \rho+\lambda(\rho'-\rho))-\eta(\rho)}{\lambda}\,d\p,\label{jb1}
\ee
where $\eta(\rho):=\tau\rho\ln\rho$. Since $\eta$ is convex, the integrand in \eqref{jb1} is nondecreasing in $\lambda$.
If $\rho\in\mathcal P$ is a critical point then, given $\delta>0$, by assumption there exists a decreasing sequence $\lambda_j\to 0+$ such that 
$$\frac{E(\rho+\lambda_j(\rho'-\rho))-E(\rho )}{\lambda_j}+\frac{K(\rho+\lambda_j(\rho'-\rho))-K(\rho)}{\lambda_j}\geq -\delta.$$
In particular, by \eqref{Kdiff}, \eqref{jb1} the integrals
$$\int_{S^2}\frac{\eta( \rho+\lambda_j(\rho'-\rho))-\eta(\rho)}{\lambda_j}\,d\p$$
are bounded below, so that 
  by  monotone convergence and the arbitrariness of $\delta$ we conclude that 
\be 
\label{jb2}
\int_{S^2}[\tau(1+\ln \rho )+(k*\rho)](\rho'- \rho)\,d\p\geq 0,
\ee
where $\ln 0:=-\infty$. 

 Define $S:=\{\p\in S^2:\rho(\p)\leq\frac{1}{8\pi}\}$, and denote by $|S|$ the two-dimensional Hausdorff measure of $S$, so that $|S|<4\pi=|S^2|$. For $s>0$ sufficiently small, let
$$\rho'(\p)=\left\{\begin{array}{ll}\rho(\p)+s,&\p\in S,\\ \rho(\p)-s\frac{|S|}{4\pi-|S|},&\p\not\in S.\end{array}\right.$$
Then $\rho'\in \mathcal P$ with $\|\rho'-\rho\|_{L^\infty(S^2)}\leq\sigma$, and  we deduce from \eqref{jb2} that $s\int_S\ln\rho\,d\p$ is bounded below, and hence $\rho (\p)>0$ for a.e. $\p\in S^2$.

 For $\ep>0$ define
$$S_\ep=\{\p\in S^2:\ep<\rho(\p)<\ep^{-1}\}.$$
Let $u\in L^\infty(S_\ep)$ with $\int_{S_\ep}u(\p)\,d\p=0$ and define
\be 
\lbl{rhoprime}\nonumber
\rho'(\p):=\left\{\begin{array}{ll}\rho(\p)+s\,u(\p),& \p\in S_\ep,\\
\rho(\p),&\p\not\in S_\ep,
\end{array}\right.
\ee
for $|s|$ sufficiently small. Then $\rho'\in\mathcal P$ with $\|\rho'-\rho\|_{L^\infty(S^2)}\leq\sigma$, and so from \eqref{jb2} we deduce that
\be 
\lbl{epEL}\nonumber
\int_{S_\ep}(\tau(1+\ln \rho )+(k*\rho))u\,d\p= 0,
\ee
and hence that for some constant $C_\ep$
\be 
\lbl{epEL2}\nonumber
\tau(1+\ln(\rho))+(k*\rho)(\p)=C_\ep \text{  for a.e. }\p\in S_\ep.
\ee
Since $S_\ep$ increases as $\ep$ decreases, and since $\rho(\p)>0$ a.e. implies that $\cup_{\ep>0}S_\ep=S^2$, the constant $C_\ep$ is independent of $\ep$ and we have that 
\be 
\lbl{EL2}\nonumber
\tau(1+\ln\rho(\p))+(k*\rho)(\p)=C\text{  for a.e. }\p\in S^2
\ee
for a constant $C$. Since $(k*\rho)(\p)$ is continuous in $\p$ it follows that $\rho$ has a continuous representative which is bounded away from zero, and that \eqref{EL1} holds as required.
 
Conversely, if $\rho$ is continuous and bounded away from zero, if \eqref{EL1} holds, and if $\|\rho'-\rho\|_{L^\infty(S^2)}\leq\sigma$, then it follows easily that 
$$\lim_{\lambda\to 0+} \frac{I(\rho+\lambda(\rho'-\rho))-I(\rho)}{\lambda}=0,$$
so that $\rho$ is a critical point. 
\end{proof}
\begin{thm}
\lbl{smoothness}
If $k\in W^{1,1}(-1,1)$ $($i.e. $k$ is absolutely continuous on $[-1,1]$$)$ then any critical point $\rho$ is smooth.
\end{thm}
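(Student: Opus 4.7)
The plan is to bootstrap regularity from the rearrangement $\rho = F(k*\rho)$ of \eqref{EL1}, where $F(s) := e^{-s/\tau}/Z$ with $Z := \int_{S^2} e^{-(k*\rho)(\p)/\tau}\,d\p$ a positive constant: Proposition \ref{critprop} already gives $\rho$ continuous and bounded below, so $F$ is smooth on the relevant range, and every regularity gain for $\phi := k*\rho$ is inherited by $\rho$. It therefore suffices to prove that convolution with a kernel in $W^{1,1}$ is one-derivative smoothing on $S^2$.

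The first ingredient is a Lipschitz estimate: for every $g \in L^\infty(S^2)$,
\[ |(k*g)(\p) - (k*g)(\p')| \leq C\|g\|_{L^\infty}\|k'\|_{L^1(-1,1)}|\p-\p'|. \]
Since $k$ is absolutely continuous, $k(\p\cdot\q) - k(\p'\cdot\q) = \int_{-1}^{1} k'(t)\chi_t(\p,\p';\q)\,dt$, where $\chi_t$ is the signed indicator of $t$ lying between $\p\cdot\q$ and $\p'\cdot\q$; after Fubini the estimate reduces to bounding the measure of the symmetric difference of the two spherical caps $\{\q:\p\cdot\q\ge t\}$ and $\{\q:\p'\cdot\q\ge t\}$, which an elementary spherical computation gives as at most $C\sqrt{1-t^2}\,|\p-\p'|$, and this weight is absorbed by $\|k'\|_{L^1}$. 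Applied to $g=\rho$ this already yields $\rho\in C^{0,1}(S^2)$.

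The second ingredient is a commutation with rotations. For any skew-symmetric $3\times 3$ matrix $A$, set $\mathcal{R}_A f(\p) := \left.\tfrac{d}{dt}\right|_{t=0} f(e^{tA}\p)$. The identity $(k*g)(e^{tA}\p) = (k*(g\circ e^{tA}))(\p)$, a direct change of variables using rotation-invariance of the spherical measure, combined with dominated convergence (valid whenever $g$ is Lipschitz, for then the difference quotients $[g(e^{tA}\q)-g(\q)]/t$ are uniformly bounded in $t,\q$) yields
\[ \mathcal{R}_A(k*g) = k*(\mathcal{R}_A g). \]

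The bootstrap now runs cleanly. Assume inductively $\rho\in C^{n,1}(S^2)$; then for any skew matrices $A_1,\ldots,A_{n+1}$ the iterated Killing derivative $\mathcal{R}_{A_{n+1}}\cdots\mathcal{R}_{A_1}\rho$ lies in $L^\infty(S^2)$, so by the Lipschitz estimate $k*(\mathcal{R}_{A_{n+1}}\cdots\mathcal{R}_{A_1}\rho)\in C^{0,1}$; iterating the commutation (each step permissible because the intermediate derivative $\mathcal{R}_{A_k}\cdots\mathcal{R}_{A_1}\rho\in C^{n-k,1}$ is Lipschitz for $k\le n$) shows $\mathcal{R}_{A_{n+1}}\cdots\mathcal{R}_{A_1}(k*\rho)\in C^{0,1}$ as well. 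Since the set $\{A\p : A \text{ skew}\}$ equals $T_\p S^2$ at every $\p\in S^2$, this forces all order-$(n+1)$ tangential derivatives of $k*\rho$ to be Lipschitz, whence $k*\rho\in C^{n+1,1}$ and $\rho=F(k*\rho)\in C^{n+1,1}$. Iterating from the base case $\rho\in C^{0,1}$ gives $\rho\in C^\infty$. The main delicacy is the very first application of the commutation, when $\rho$ is only Lipschitz so $\mathcal{R}_A\rho$ exists a.e. only; the uniform bound on the difference quotients is exactly what makes dominated convergence apply in that low-regularity setting, after which every further application takes place at strictly positive classical regularity.
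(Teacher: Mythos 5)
Your proposal is correct, and the route is genuinely different from the paper's. The paper proves the key convolution lemma by taking a smooth approximating sequence $k_j \to k$ in $W^{1,1}(-1,1)$, writing the difference quotient of $k_j*h$ via the fundamental theorem of calculus, and passing to the limit $j\to\infty$; the conclusion is that $\X_i(k*h)$ is a continuous function whenever $h\in C^0(S^2)$, together with the commutation $\X_i(k*h) = k*(\X_ih)$ for $h\in C^1$. The bootstrap then climbs $C^{r-1}\Rightarrow C^r$. You instead avoid approximation entirely: you use the layer-cake representation $k(\p\cdot\q)-k(\p'\cdot\q)=\int_{-1}^1 k'(t)\chi_t\,dt$ plus Fubini to reduce the modulus of continuity of $k*g$ to the measure of the symmetric difference of two spherical caps, obtaining a clean Lipschitz bound $\|k*g\|_{C^{0,1}}\lesssim\|k'\|_{L^1}\|g\|_{L^\infty}$ for arbitrary $g\in L^\infty$, and you establish the commutation $\mathcal{R}_A(k*g)=k*(\mathcal{R}_Ag)$ for $g$ merely Lipschitz by bounded difference quotients and Rademacher. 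The bootstrap then climbs the scale $C^{n,1}\Rightarrow C^{n+1,1}$. Your version buys a quantitative, approximation-free estimate and a smoothing statement valid for $L^\infty$ inputs, at the cost of having to verify the (elementary but not quite one-line) geometric bound $|C(\p,t)\triangle C(\p',t)|\le C\sqrt{1-t^2}\,|\p-\p'|$ uniformly in $t$; the paper's version is softer but needs only bounded convergence once the $W^{1,1}$-approximation is in place. Both ultimately invoke the same fact (cited to \cite{garrett}) that control of all iterated Killing derivatives of order $r$ gives $C^r$ on $S^2$; you should state that step a little more carefully, since at each point only two of the three fields $\K_i\p$ are linearly independent, so one needs a local frame argument or a partition of unity, but this is the same issue the paper glosses by reference.
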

\begin{proof}
We simplify and extend the method of Vollmer \cite[Proposition 45]{vollmer2016} for the Onsager interaction kernel. For $t\in\R$ and $i=1,2,3$, define the skew matrices
$$\K_1=\left(\begin{array}{ccc}0&0&0\\0&0&-1\\
0&1&0\end{array}\right),\; \K_2=\left(\begin{array}{ccc}0&0&-1\\0&0&0\\
1&0&0\end{array}\right),\; \K_3=\left(\begin{array}{ccc}0&-1&0\\1&0&0\\0&0&0\end{array}\right),$$
and $\Rv_i(t):=\exp (\K_it)$, so that, for example,
$$\Rv_1(t)=\left(\begin{array}{rrr}1&0&0\\
0&\cos t&-\sin t\\0&\sin t&\cos t\end{array}\right)$$
represents a rotation through an angle $t$ around the $x_1$-axis. Since the vectors $\frac{d}{dt}\Rv_i(t)\p|_{t=0}=\K_i\p$ span the tangent space to $S^2$ at $\p$, it follows that if $f\in C^0(S^2)$ and $r\geq 1$ then (see \cite{garrett})  $f\in C^r(S^2)$ if and only if 
$$\left(\prod_{i=1}^s\X_{j_i}\right)f\in C^0(S^2)\text{ for all } j_i\in\{1,2,3\}, 1\leq s\leq r,$$
 where $\X_if(\p):=\left.\frac{d}{dt}f(\Rv_i(t)\p)\right|_{t=0}$.
\begin{lem}
\lbl{convlem}
If $h\in C^0(S^2)$ then $\X_i(k*h)\in C^0(S^2)$. If $h\in C^1(S^2)$ then
\be 
\lbl{conveq}\nonumber
\X_i(k*h)=k*(\X_ih).
\ee
\end{lem}
\begin{proof}[Proof of Lemma.]
Let $h\in C^0(S^2)$. Let $k_j\in C^\infty([-1,1])$ with $k_j\to k$ in $W^{1,1}(-1,1)$. Then 
\be 
\lbl{15}
\int_{S^2}k_j(\Rv_i(t)\p\cdot\q)h(\q)\,d\q-\int_{S^2}k_j(\p\cdot\q)h(\q)\,d\q&&\\
&&\hspace{-1.5in}=\int_0^t\int_{S^2}k_j'(\Rv_i(s)\p\cdot\q)\K_i\Rv_i(s)\p\cdot\q\, h(\q)\,d\q\,ds.\nonumber
\ee
We pass to the limit $j\to\infty$ in \eqref{15}. The integrals on the left-hand side converge to the corresponding values for $k$ by bounded convergence. For the right-hand side we have that for some constant $C$
\begin{eqnarray*}
\int_0^t\int_{S^2}\left|\left(k_j'(\Rv_i(s)\p\cdot\q)-k'(\Rv_i(s)\p\cdot \q)\right)\K_i\Rv_i(s)\p\cdot\q\,h(\q)\right|\,d\q\,ds&&\\
&&\hspace{-2.5in}\leq C\int_0^t\int_{S^2}\left|k_j'(\Rv_i(s)\p\cdot\q)-k'(\Rv_i(s)\p\cdot\q)\right|d\q\,ds\\
&&\hspace{-2.5in}=C\int_0^t\int_{S^2}\left|k_j'(\p\cdot\q)-k'(\p\cdot\q)\right|d\q\,ds\\
&&\hspace{-2.5in}=2\pi C\int_0^t\int_{-1}^1\left|k_j'(x)-k'(x)\right|\,dx\,ds \arr 0,
\end{eqnarray*}
where the third line is obtained by changing variables $\q\to\Rv_i(s)\q$ on the sphere. Hence we obtain
\be 
\lbl{16}\nonumber
\int_{S^2}k(\Rv_i(t)\p\cdot\q)h(\q)\,d\q-\int_{S^2}k(\p\cdot\q)h(\q)\,d\q&&\\
&&\hspace{-1.5in}=\int_0^t\int_{S^2}k'(\Rv_i(s)\p\cdot\q)\K_i\Rv_i(s)\p\cdot\q\, h(\q)\,d\q\,ds.\nonumber
\ee
Similarly, if  $t_j\to 0$ then
\begin{eqnarray*}
\int_{S^2}k'(\Rv_i(t_j)\p\cdot\q)\K_i\Rv_i(t_j)\p\cdot\q\,h(\q)\,d\q&&\\
&&\hspace{-1.5in}=\int_{S^2}k'(\p\cdot\q)\K_i\Rv_i(t_j)\p\cdot\Rv_i(t_j)\q\,h(\Rv_i(t_j)\q)\,d\q,\\
&&\hspace{-1.5in}\arr \int_{S^2}k'(\p\cdot\q)\K_i\p\cdot\q\,h(\q)\,d\q,
\end{eqnarray*}
so that $(k*h)(\Rv_i(t)\p)$ is differentiable at $t=0$ with derivative
\be 
\lbl{17}
\left.\frac{d}{dt}(k*h)(\Rv_i(t)\p)\right|_{t=0}=\int_{S^2}k'(\p\cdot\q)\K_i\p\cdot\q\,h(\q)\,d\q.
\ee
The right-hand side of \eqref{17} belongs to $C^0(S^2)$ because if $\p_j\to\p$ we can write $\p_j=\Q_j\e$ for some $\e\in S^2$ and $\Q_j\to\Q$ in $SO(3)$, so that 
\begin{eqnarray*}
\int_{S^2}k'(\p_j\cdot\q)\K_i\p_j\cdot\q\,h(\q)\,d\q&=&\int_{S^2}k'(\e\cdot \q)\K_i\Q_j\e\cdot\Q_j\q\,h(\Q_j\q)\,d\q\\
&\to&\int_{S^2}k'(\e\cdot\q)\K_i\Q\e\cdot\Q\q\,h(\Q\q)\,d\q\\
&=&\int_{S^2}k'(\p\cdot\q)\K_i\p\cdot\q\,h(\q)\,d\q.
\end{eqnarray*}
Hence $\X_i(k*h)\in C^0(S^2)$.

If in addition $h\in C^1(S^2)$ then
\begin{eqnarray*}
\X_i(k*h)(\p)&=&\frac{d}{dt}\left.\int_{S^2}k(\Rv_i(t)\p\cdot\q)h(\q)\,d\q\right|_{t=0}\\
&=&\frac{d}{dt}\left.\int_{S^2}k(\p\cdot\q)h(\Rv_i(t)\q)\,d\q\right|_{t=0}\\
&=&\int_{S^2}k(\p\cdot\q)\left.\frac{d}{dt}h(\Rv_i(t)\q)\right|_{t=0}d\q,
\end{eqnarray*}
as required.
\end{proof}
Continuing the proof of Theorem \ref{smoothness}, we show by induction that $\rho\in C^r(S^2)$ for any $r$. This is true for $r=0$, so suppose that $r\geq 1$ and $\rho\in C^{r-1}(S^2)$. Then since $\rho$ is bounded away from zero $\ln\rho\in C^{r-1}(S^2)$. Let $j_i\in\{1,2,3\}$, $i=1,\ldots, r$. Then  from \eqref{EL1} 
$\left(\prod_{i=2}^r\X_{j_i}\right)(k*\rho)\in C^0(S^2)$, so that, by repeated use of Lemma \ref{convlem}, 
$$\left(\prod_{i=2}^r\X_{j_i}\right)(k*\rho)=k*\left(\prod_{i=2}^r\X_{j_i}\right)\rho. $$
Hence, again by Lemma \ref{convlem}, $$\X_{j_1}\left(k*\left(\prod_{i=2}^r\X_{j_i}\right)\rho\right)=\left(\prod_{i=1}^r\X_{j_i}\right)(k*\rho)\in C^0(S^2).$$ But then by \eqref{EL1} $\left(\prod_{i=1}^r\X_{j_i}\right)\ln\rho\in C^0(S^2)$. Thus   $\ln\rho\in C^r(S^2)$ and hence $\rho\in C^r(S^2)$. 
\end{proof}

\section{Axisymmetry}
\lbl{axi}
\subsection{Axisymmetry for the Maier-Saupe interaction}
\lbl{MSax}

For the Maier-Saupe interaction it is convenient to use  the orthonormal eigenbasis $\{\e_i\}$ of 
the second moment tensor
\be 
\lbl{M}\nonumber
\M=\int_{S^2}\p\otimes\p\,\rho(\p)\,d\p,
\ee 
so that 
\be 
\lbl{M1}\nonumber
\M=\sum_{i=1}^3\gamma_i\e_i\otimes\e_i,
\ee 
where the $\gamma_i>0$ are the eigenvalues of $\M$. Writing $\p=\sum_{i=1}^3p_i\e_i$  the Euler-Lagrange equation \eqref{EL} then takes the form
\be 
\lbl{MSEL}
\rho(\p)=Z^{-1}\exp\left(\beta\sum_{j=1}^3\gamma_jp_j^2\right),
\ee
where  
\be 
\lbl{Z}\nonumber
Z:=\int_{S^2}\exp\left(\beta\sum_{i=1}^3\gamma_jp_j^2\right)\,d\p.
\ee
Thus we have to solve the equations 
\be 
\lbl{gamma}
\int_{S^2}p_i^2 \exp\left(\beta\sum_{i=1}^3\gamma_jp_j^2\right)=Z\gamma_i,\;\;i=1,2,3,
\ee
for the $\gamma_i$, and to prove axisymmetry is equivalent to showing that there are no solutions with the $\gamma_i$ all different.
\begin{lem}
\lbl{div}
If $u=u(\p)$ is a smooth function of $\p\in \R^3$ then 
\be 
\lbl{div1}\nonumber
\int_{S^2}\left(p_1\frac{\partial u}{\partial p_3}-p_3\frac{\partial u}{\partial p_1}\right)\,d\p=0.
\ee
\begin{proof}
Apply the divergence theorem on the unit ball $B(0,1)$ to the divergence-free vector field $\vv (\p)=\left(\frac{\partial u}{\partial p_3},0,-\frac{\partial u}{\partial p_1}\right)$ .
\end{proof}
\end{lem}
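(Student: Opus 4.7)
The plan is to realize the integrand as the outward normal flux on $\partial B(0,1)=S^2$ of a divergence-free vector field defined near $\overline{B(0,1)}$, so that the divergence theorem yields the identity immediately. Concretely, I would set
$$\vv(\p):=\left(\frac{\partial u}{\partial p_3},\;0,\;-\frac{\partial u}{\partial p_1}\right),$$
which is $C^1$ on $\R^3$ since $u$ is smooth.

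First I would verify that $\vv$ is divergence-free, which is immediate from equality of mixed partials:
$$\Div\vv=\frac{\partial^2 u}{\partial p_1\partial p_3}+0-\frac{\partial^2 u}{\partial p_3\partial p_1}=0.$$
Next, since the outward unit normal to $S^2$ at $\p$ is $\p$ itself, the flux integrand is
$$\vv(\p)\cdot\p=p_1\frac{\partial u}{\partial p_3}-p_3\frac{\partial u}{\partial p_1},$$
which is exactly the integrand in the statement; the middle component of $\vv$ was chosen to vanish precisely so that no $p_2\,\partial u/\partial p_2$ contribution survives on the boundary. Applying the divergence theorem then gives
$$\int_{S^2}\left(p_1\frac{\partial u}{\partial p_3}-p_3\frac{\partial u}{\partial p_1}\right)d\p=\int_{B(0,1)}\Div\vv\,d\p=0.$$

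There is no substantive obstacle here: the whole content of the lemma is the construction of $\vv$, which is forced by two requirements, namely pointwise vanishing divergence on the ball and outward flux on $S^2$ matching the prescribed integrand; the antisymmetric swap between the first and third slots, with opposite signs, is exactly what makes both conditions hold simultaneously. Alternatively, one may view the identity as expressing that the rotational derivative $\K_2\p\cdot\nabla u$ integrates to zero on $S^2$ by $SO(3)$-invariance of the area measure, which is consistent with the tangent vector fields $\X_i$ used in the proof of \thmref{smoothness}.
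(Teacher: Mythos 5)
Your proof is correct and is exactly the paper's argument: apply the divergence theorem on $B(0,1)$ to the same vector field $\vv=(\partial u/\partial p_3,\,0,\,-\partial u/\partial p_1)$, which is divergence-free by equality of mixed partials and whose outward flux on $S^2$ reproduces the stated integrand. The closing remark about rotational invariance is a nice complementary perspective but is not needed.
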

\begin{lem}
\lbl{expineq}
$$ x(e^x-1)>0\text{  for }x\neq 0.$$
\end{lem}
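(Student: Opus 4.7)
The statement is an elementary one-variable inequality, so the plan is simply a two-case sign analysis based on the strict monotonicity of $e^x$. The function $e^x$ is strictly increasing on $\mathbb{R}$ with $e^0 = 1$, so $e^x - 1$ has the same sign as $x$: it is positive for $x > 0$ and negative for $x < 0$. Multiplying $x$ by a quantity of the same sign yields a strictly positive product whenever $x \neq 0$.

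Concretely, I would split into the cases $x > 0$ and $x < 0$. For $x > 0$, I would note $e^x > e^0 = 1$ so $e^x - 1 > 0$, and hence $x(e^x - 1) > 0$ as a product of two positive numbers. For $x < 0$, I would note $e^x < e^0 = 1$ so $e^x - 1 < 0$, and hence $x(e^x - 1) > 0$ as a product of two negative numbers. Since $x \neq 0$ is assumed, these two cases exhaust all possibilities.

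There is no real obstacle here; the lemma is just recording, in the form convenient for later use, the fact that $e^x - 1$ and $x$ have the same strict sign. If one preferred a single-line argument, one could instead observe that the function $\varphi(x) := x(e^x - 1)$ satisfies $\varphi(0) = 0$ and $\varphi'(x) = e^x - 1 + x e^x$, note that $\varphi'(0) = 0$ and $\varphi''(x) = (2 + x)e^x$, but this is more elaborate than needed and the monotonicity argument is cleaner.
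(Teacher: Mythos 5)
Your proof is correct, and the paper's own ``proof'' is simply the single sentence ``This is elementary''; your two-case sign analysis (that $e^x-1$ has the same strict sign as $x$ by strict monotonicity of $e^x$) is precisely the argument being taken for granted.
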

\begin{proof} This is elementary. \end{proof}
\begin{thm}
\lbl{MSaxithm}
All critical points for the Maier-Saupe interaction are axisymmetric.
\end{thm}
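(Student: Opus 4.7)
The plan is to derive a contradiction from the assumption that the eigenvalues $\gamma_1,\gamma_2,\gamma_3$ of $\M$ are pairwise distinct, by combining \lemref{div} with \lemref{expineq}: \lemref{div} applied to $u=p_i p_j\rho$ will pin down the mixed moments $I_{ij}:=\int_{S^2}p_i^2 p_j^2\rho\,d\p$, while \lemref{expineq} applied to the ratio $\rho(\p)/\rho(\sigma\p)$, with $\sigma$ the swap $p_1\leftrightarrow p_3$, will force $I_{12}-I_{23}$ to have a strict sign. These two conclusions will contradict each other. By \eqref{MSEL} and Proposition \ref{critprop}, in the eigenbasis $\{\e_1,\e_2,\e_3\}$ of $\M$ one has $\rho(\p)=Z^{-1}\exp(\beta\sum_j\gamma_j p_j^2)$ with $\rho$ continuous and bounded below by a positive constant, and axisymmetry of $\rho$ is equivalent to two of the $\gamma_j$ being equal.

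First, I will apply the identity of \lemref{div} -- which, by the same divergence-theorem argument, holds with any pair $(p_i,p_j)$ in place of $(p_1,p_3)$ -- to the test function $u=p_i p_j\rho$. Using $\partial_k\rho=2\beta\gamma_k p_k\rho$, a one-line computation gives
\[
p_i\partial_j u-p_j\partial_i u = \rho\bigl[(p_i^2-p_j^2) - 2\beta(\gamma_i-\gamma_j)\, p_i^2 p_j^2\bigr],
\]
so after integrating over $S^2$ and using $\int p_i^2\rho\,d\p=\gamma_i$ (from the diagonalization of $\M$) the identity reduces to $(\gamma_i-\gamma_j)(1-2\beta I_{ij})=0$. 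Under the distinctness assumption this forces $I_{ij}=\tfrac{1}{2\beta}$ for every pair, and in particular $I_{12}-I_{23}=0$.

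Second, I set $\sigma\p:=(p_3,p_2,p_1)$ and $\tilde\rho(\p):=\rho(\sigma\p)$, so that \eqref{MSEL} gives $\rho(\p)=\tilde\rho(\p)e^{\phi(\p)}$ with $\phi(\p):=\beta(\gamma_1-\gamma_3)(p_1^2-p_3^2)$. Since $\gamma_1\neq\gamma_3$, \lemref{expineq} yields $\phi(e^\phi-1)>0$ off the measure-zero set $\{p_1^2=p_3^2\}$; multiplying by the non-negative swap-invariant weight $p_2^2\tilde\rho$, integrating, and using the change of variables $\p\mapsto\sigma\p$ together with $\phi\circ\sigma=-\phi$ and $p_2^2\circ\sigma=p_2^2$ gives
\[
0<\int_{S^2}p_2^2\phi(\rho-\tilde\rho)\,d\p = 2\int_{S^2}p_2^2\phi\rho\,d\p = 2\beta(\gamma_1-\gamma_3)(I_{12}-I_{23}),
\]
contradicting the value $0$ obtained in the first step.

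The main obstacle is spotting the pairing that makes both lemmas refer to the \emph{same} scalar combination $I_{12}-I_{23}$: the bilinear antisymmetric structure of \lemref{div} dictates the quadratic test function $u=p_i p_j\rho$, and $p_2^2$ is the simplest non-negative weight fixed by the swap $p_1\leftrightarrow p_3$ that converts \lemref{expineq} into a statement about those same mixed moments. Once this pairing is spotted the proof reduces to the two short calculations above.
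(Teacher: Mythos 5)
Your proof is correct and follows essentially the same approach as the paper's: apply Lemma~\ref{div} to $u=p_ip_j\rho$ to force the mixed fourth moments $\int_{S^2}p_i^2p_j^2\rho\,d\p$ to coincide, then use a coordinate-swap change of variables on $S^2$ together with Lemma~\ref{expineq} to show they differ, a contradiction. The only differences are cosmetic — you swap $p_1\leftrightarrow p_3$ with weight $p_2^2$ where the paper swaps $p_2\leftrightarrow p_3$ with weight $p_1^2$, and you phrase the swap step as a pointwise inequality integrated against a $\sigma$-invariant weight rather than as the paper's "add two integrals and factor" — but the underlying identity and mechanism are the same.
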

\begin{proof}
Suppose for contradiction that $\rho$ is a critical point with the $\gamma_i$ all different. We apply Lemma \ref{div} with $u(\p)=p_1p_3\,\rho(\p)$. Thus by \eqref{MSEL}
\be 
\lbl{10}\nonumber
\int_{S^2}(p_1^2-p_3^2)\rho(\p)\,d\p=2\beta(\gamma_1-\gamma_3)\int_{S^2}p_1^2p_3^2\,\rho(\p)\,d\p,
\ee
and hence, since $\gamma_1\neq\gamma_3$,
\be
\lbl{11}
2\beta\int_{S^2}p_1^2p_3^2\,\rho(\p)\,d\p= 1.
\ee
Swapping 2 and 3 (thus using that $\gamma_1\neq\gamma_2$) and subtracting the resulting equation from \eqref{11}, we obtain
\be
\lbl{12}
\int_{S^2}p_1^2(p_3^2-p_2^2)\,\exp\left(\beta\sum_{j=1}^3\gamma_jp_j^2\right)\,d\p= 0.
\ee
Interchanging $p_2$ and $p_3$ as integration variables on the sphere, we deduce that
\be 
\lbl{13}
\int_{S^2}p_1^2(p_2^2-p_3^2)\,\exp\left(\beta\left(\gamma_1p_1^2+\gamma_3 p_2^2+\gamma_2p_3^2\right)\right)\,d\p= 0,
\ee
so that adding \eqref{12}, \eqref{13}  we have that
\be 
\lbl{14}\nonumber
\int_{S^2}p_1^2(p_3^2-p_2^2)\,\exp\left(\beta\sum_{j=1}^3\gamma_jp_j^2\right)\left(1-\exp\left[\beta(\gamma_3-\gamma_2)(p_2^2-p_3^2)\right]\right)\,d\p=0.
\ee
Multiplying by $\gamma_3-\gamma_2$ and using Lemma \ref{expineq} we see that since $\gamma_2\neq\gamma_3$ the integrand is strictly positive for a.e. $\p\in S^2$, a contradiction.

\end{proof}
\subsection{Axisymmetry for the coupled dipolar/Maier-Saupe interaction}
\lbl{dipMS}

Zhou, Wang, Wang \& Forest   \cite{zhouetal2007} show that stable critical points are axisymmetric for the coupled dipolar/Maier-Saupe interaction. We show that the method in Section \ref{MSax} also works in this case.  We denote by 
\be 
\lbl{polarity}\nonumber
\m:=\int_{S^2}\p\,\rho(\p)\,d\p 
\ee 
the first moment of $\rho$,  the {\it polarity vector}. For the coupled dipolar/Maier-Saupe interaction critical points $\rho$ are solutions to
\be 
\lbl{EL3}
\rho(\p)=Z^{-1}\exp\left(\alpha\m\cdot\p+\beta\sum_{j=1}^3\gamma_jp_j^2\right),
\ee 
where $Z=\displaystyle\int_{S^2}\exp\left(\alpha\m\cdot\p+\beta\sum_{j=1}^3\gamma_jp_j^2\right)\,d\p$. 
 
Note that any critical point for the Maier-Saupe potential satisfies $\m=\bzero$ and thus is a solution to \eqref{EL3}, and hence  by Theorem \ref{MSaxithm} is axisymmetric. Thus for the purpose of proving axisymmetry we may assume that $\m\neq \bzero$. Following \cite{zhouetal2007} we then choose the eigenbasis of $\M$ so that $m_3=\m\cdot\e_3>0$. With this choice it is  proved in \cite{jietal2006}, \cite[Theorem 1]{zhouetal2007} that $m_1=m_2=0$, i.e. $\m$ is coaxial with an eigenvector of $\M$. It is also shown in \cite[Theorem 2]{zhouetal2007} that for $\rho$ to have nonzero polarity vector $\m$ we must have that $\alpha\gamma_3>1$; in particular, since each $\gamma_i<1$, all critical points have $\m=\bzero$ when $\alpha\leq 1$ (i.e. when the dipole-dipole interaction is sufficiently weak), a result proved in \cite{jietal2006}.

It is proved in \cite[Theorem 4]{zhouetal2007}, that any $L^\infty$ local minimizer $\rho$ of $I$ satisfies
\be 
\lbl{18}
\gamma_3>\max(\gamma_1,\gamma_2).
\ee
We will show that any critical point is axisymmetric under the weaker requirement that 
\be 
\lbl{18a}
(\gamma_3-\gamma_1)(\gamma_3-\gamma_2)>0.
\ee

\begin{thm}
\lbl{caxi}
If \eqref{18a} holds then $\gamma_1=\gamma_2$ in the selected coordinate system, so that $\rho=\rho(p_3)$ is axisymmetric. In particular any $L^\infty$ local minimizer of $I$  is axisymmetric.
\end{thm}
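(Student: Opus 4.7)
The plan is to adapt the approach of Theorem \ref{MSaxithm} by applying Lemma \ref{div} to cleverly chosen test functions so as to compute a single moment in two mutually incompatible ways. The dipole factor $e^{\alpha m_3 p_3}$ in $\rho$ breaks the $p_3$-reflection symmetry, and the crux is to exploit exactly this asymmetry in one computation while using algebraic identities in the other. First I reduce to the case $\m \neq \bzero$: if $\m = \bzero$, then \eqref{EL3} collapses to \eqref{MSEL} and axisymmetry follows from Theorem \ref{MSaxithm}. Hence assume $m_3 > 0$, in which case $\alpha\gamma_3 > 1$ by the Zhou et al.\ necessary condition quoted in the paragraph preceding the theorem. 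Arguing by contradiction, suppose $\gamma_1 \neq \gamma_2$, and let
$$d \;:=\; \int_{S^2} p_1^2 p_2^2 p_3\,\rho(\p)\,d\p.$$

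For the algebraic evaluation I apply Lemma \ref{div} with $u=p_1 p_2^2 \rho$ in the $(p_1,p_3)$-plane; a short differentiation and integration yields $\alpha m_3 c_{12} - b_{23} + 2\beta(\gamma_3-\gamma_1)d=0$, where $c_{12}:=\int p_1^2 p_2^2 \rho$ and $b_{23}:=\int p_2^2 p_3 \rho$. The hypothesis $\gamma_1\neq\gamma_2$, together with Lemma \ref{div} applied to $u=p_1 p_2\rho$ in the $(p_1,p_2)$-plane (where the dipole contributes nothing since $\partial_1,\partial_2$ annihilate $e^{\alpha m_3 p_3}$), gives $2\beta c_{12}=1$; and Lemma \ref{div} applied to $u=p_2 \rho$ in the $(p_2,p_3)$-plane gives $b_{23}=m_3(1-\alpha\gamma_2)/[2\beta(\gamma_3-\gamma_2)]$, permissible by \eqref{18a}. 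Substituting and simplifying telescopes the numerator down to $m_3(1-\alpha\gamma_3)$ and produces
$$d \;=\; \frac{m_3(1-\alpha\gamma_3)}{(2\beta)^2(\gamma_3-\gamma_1)(\gamma_3-\gamma_2)}.$$
By \eqref{18a} the denominator is strictly positive; since $m_3>0$ and $1-\alpha\gamma_3<0$, we conclude $d<0$. For the sign evaluation, decompose $\rho$ into its $p_3$-even and $p_3$-odd parts; the odd part equals $Z^{-1}e^{\beta(\gamma_1 p_1^2+\gamma_2 p_2^2+\gamma_3 p_3^2)}\sinh(\alpha m_3 p_3)$, so that $p_3\rho_{\mathrm{odd}}\geq 0$ a.e.\ with equality only on the null set $\{p_3=0\}$, and multiplying by the nonnegative factor $p_1^2 p_2^2$ and integrating gives $d>0$ strictly. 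The two conclusions are incompatible, so $\gamma_1=\gamma_2$; then $\rho$ depends on $(p_1,p_2)$ only through $p_1^2+p_2^2=1-p_3^2$, so $\rho=\rho(p_3)$ is axisymmetric.

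The main obstacle is identifying the right test function: the naive MS choice $u=p_1 p_3\rho$ gives, in the dipolar setting, an identity contaminated by the unsigned moment $\int p_1^2 p_3\rho$, whereas $u=p_1 p_2^2\rho$ is engineered so the contamination reduces to $c_{12}$ and $b_{23}$, both independently computable via Lemma \ref{div}, and so that after substitution the numerator collapses to the single factor $1-\alpha\gamma_3$ that carries its sign from the condition $\alpha\gamma_3>1$. The $p_3$-reflection then supplies the opposite sign at no further cost, closing the argument uniformly in both cases of \eqref{18a}.
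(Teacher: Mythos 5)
Your proof is correct and takes a genuinely different route from the paper, and I believe it is the cleaner one. The paper applies Lemma~\ref{div} to $u=p_1p_3\rho$ and $u=p_2p_3\rho$, producing two relations (equations~\eqref{21}--\eqref{24} in the text) involving the moments $a:=\int p_2^2p_3\rho\,d\p$, $b:=\int p_1^2p_3\rho\,d\p$; it then controls the sign of $(\gamma_2-\gamma_1)\int p_3^2(p_1^2-p_2^2)\rho\,d\p$ by the $p_1\leftrightarrow p_2$ swap together with Lemma~\ref{expineq}, shows $(\gamma_2-\gamma_1)(a-b)>0$ and $b>0$ by parity in $p_3$, and assembles these into a contradiction. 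You instead apply the lemma to $u=p_1p_2\rho$, $u=p_2\rho$ and $u=p_1p_2^2\rho$, solve for $c_{12}=\int p_1^2p_2^2\rho\,d\p$ and $b_{23}=\int p_2^2p_3\rho\,d\p$ in closed form (using $\int p_i^2\rho=\gamma_i$ and $\int p_3\rho=m_3$), and telescope everything into the single explicit formula
\begin{equation*}
d=\int_{S^2}p_1^2p_2^2p_3\,\rho(\p)\,d\p=\frac{m_3(1-\alpha\gamma_3)}{(2\beta)^2(\gamma_3-\gamma_1)(\gamma_3-\gamma_2)},
\end{equation*}
whose sign is negative under $\alpha\gamma_3>1$ and \eqref{18a}, while the $p_3$-parity decomposition makes $d$ visibly positive. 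I have verified the differentiation and the telescoping of the numerator to $1-\alpha\gamma_3$; both are correct, and the required nonvanishing of $\gamma_3-\gamma_1$, $\gamma_3-\gamma_2$, $\gamma_1-\gamma_2$ is precisely what \eqref{18a} and the contradiction hypothesis supply. A noteworthy feature of your version: the final sign argument uses only the \emph{product} $(\gamma_3-\gamma_1)(\gamma_3-\gamma_2)>0$, so it runs uniformly over both branches of \eqref{18a}, whereas the paper's last step explicitly invokes the stronger condition \eqref{18} (i.e.\ $\gamma_3>\max(\gamma_1,\gamma_2)$) to fix the sign of an intermediate term. Your derivation therefore matches the stated generality of the theorem more transparently. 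One small presentational point: you apply Lemma~\ref{div} ``in the $(p_1,p_2)$-plane'' and ``in the $(p_2,p_3)$-plane''; as stated the lemma concerns the $(p_1,p_3)$ pair, so you should note that the analogous identities for the other coordinate pairs follow by the same divergence-theorem argument (or by the $O(3)$ symmetry of $S^2$).
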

\begin{proof}
Assume for contradiction that  $\gamma_1\neq\gamma_2$. We again apply Lemma \ref{div} with $u(\p)=p_1p_3\,\rho(\p)$, to obtain (since $m_1=0$)
\be 
\lbl{19}\nonumber
(\gamma_3-\gamma_1)\left(2\beta\int_{S^2}p_1^2p_3^2\rho(\p)\,d\p-1\right)=-\alpha m_3\int_{S^2}p_1^2p_3\rho(\p)\,d\p.
\ee
Similarly
\be 
\lbl{20}\nonumber
(\gamma_3-\gamma_2)\left(2\beta\int_{S^2}p_2^2p_3^2\rho(\p)\,d\p-1\right)=-\alpha m_3\int_{S^2}p_2^2p_3\rho(\p)\,d\p.
\ee
Hence
\be 
\lbl{21}
2\beta\int_{S^2} p_3^2(p_1^2-p_2^2)\rho(\p)\,d\p=\alpha m_3\left(\frac{a}{\gamma_3-\gamma_2}-\frac{b}{\gamma_3-\gamma_1}\right),
\ee
where
\be 
\lbl{22}\nonumber
a:=\int_{S^2}p_2^2p_3\rho(\p)\,d\p,\;\;\;b:=\int_{S^2}p_1^2p_3\rho(\p)\,d\p.
\ee
Note that
\be 
\lbl{23}\nonumber
(\gamma_2-\gamma_1)\int_{S^2}p_3^2(p_1^2-p_2^2)\rho(\p)\,d\p=\half\int_{S^2}p_3^2f(\p)g(\p)\,d\p, 
\ee
where $f(\p):=(\gamma_2-\gamma_1)(p_1^2-p_2^2)(1-\exp [\beta(\gamma_2-\gamma_1)(p_1^2-p_2^2)])$ and $g(\p)>0$. Hence
by \eqref{21} and Lemma \ref{expineq}
\be 
\lbl{24}
(\gamma_2-\gamma_1)\left(\frac{a}{\gamma_3-\gamma_2}-\frac{b}{\gamma_3-\gamma_1}\right)<0.
\ee
But 
\be 
\lbl{25}\nonumber
(\gamma_2-\gamma_1)(a-b)&=&-\half\int_{S^2}p_3\exp(\alpha m_3p_3)f(\p)g(\p)\,d\p\\
&=&-\int_{\{p_3>0\}}p_3\sinh(\alpha m_3p_3)f(\p)g(\p)\,d\p>0,\nonumber
\ee 
since $g$ is even in $p_3$. A similar argument shows that $b>0$. Hence from \eqref{18}, \eqref{24} we get that
\be 
\lbl{26}\nonumber
(\gamma_2-\gamma_1)\left(\frac{1}{\gamma_3-\gamma_2}-\frac{1}{\gamma_3-\gamma_1}\right)b=\frac{(\gamma_2-\gamma_1)^2}{(\gamma_3-\gamma_1)(\gamma_3-\gamma_2)}b<0,
\ee 
a contradiction by \eqref{18a}.
\end{proof}

\subsection{Non-axisymmetric critical points}
\lbl{nonaxi}
In  the appendix to \cite{zhouetal2007} it is shown that in general critical points for the coupled dipolar/Maier-Saupe interaction are not axisymmetric. We now give a general sufficient condition for there to exist non-axisymmetric critical points (which, however, does not apply to the coupled dipolar/Maier-Saupe interaction -- see Remark \ref{notwork}).

We denote by $P^{48}$  the cubic group consisting of real  orthogonal  $3\times 3$ matrices $\Q_i, i=1,\ldots 48,$ with each row and each column having a single nonzero entry  $\pm 1$. A probability density $\rho\in \mathcal P$ has {\it cubic symmetry} provided
$$\rho(\Q_i \p)=\rho(\p) \text{  for all }i\text{ and a.e. }\p\in S^2.$$\begin{thm}
\lbl{nonaxithm}
There exists a non-axisymmetric critical point with cubic symmetry if
\be 
\lbl{27}
\int_{-1}^1k(t)P_{2r}(t)\,dt<-2\tau,
\ee
for some $r\geq 2$, where $P_{2r}$ is the $(2r)^{\rm th}$ Legendre polynomial.
\end{thm}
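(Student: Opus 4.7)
The plan is to produce the required critical point as a minimizer of $I$ over the closed subclass $\mathcal P_{\rm cub}\subset\mathcal P$ of probability densities with cubic symmetry.

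\textbf{Step 1: a cubic-symmetric minimizer exists.} Given a minimizing sequence $\rho_n\in\mathcal P_{\rm cub}$, uniform entropy bounds yield uniform integrability via de la Vallée Poussin, hence weak $L^1$-compactness; along a subsequence $\rho_n\weak\rho_*$, and $\rho_*\in\mathcal P_{\rm cub}$ since nonnegativity, unit mass, and cubic symmetry all pass to weak $L^1$-limits. The entropy term is weakly lower semicontinuous (convex and strongly $L^1$-l.s.c.). For the interaction term, equicontinuity of $\p\mapsto(k*\rho_n)(\p)$ (from uniform continuity of $k$ on $[-1,1]$ and $\|\rho_n\|_{L^1}=1$) together with Arzelà–Ascoli gives $k*\rho_n\to k*\rho_*$ uniformly on $S^2$, whence $\iint k(\p\cdot\q)\rho_n(\p)\rho_n(\q)\,d\p\,d\q\to \iint k(\p\cdot\q)\rho_*(\p)\rho_*(\q)\,d\p\,d\q$.

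\textbf{Step 2: the minimizer is a critical point of $I$.} Since $\rho_*$ is cubic-symmetric and the kernel is $O(3)$-invariant, $k*\rho_*$ is cubic-symmetric, hence so is $F:=\tau(1+\ln\rho_*)+k*\rho_*$. Running the variational argument of Proposition \ref{critprop} with cubic-symmetric test functions $u\in L^\infty(S_\ep)$ satisfying $\int u=0$ yields $\int_{S_\ep}F u\,d\p=0$ for all such $u$; testing on disjoint cubic-symmetric neighbourhoods of distinct orbits then forces $F$ to be constant on $S^2$, so \eqref{EL1} holds and Proposition \ref{critprop} identifies $\rho_*$ as a critical point of $I$. (This is an instance of Palais' symmetric criticality principle.)

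\textbf{Step 3: $\rho_*$ is not isotropic, and is therefore non-axisymmetric.} The ring of cubic-invariant polynomials on $\R^3$ is generated by $p_1^2+p_2^2+p_3^2$, $p_1^2p_2^2+p_2^2p_3^2+p_3^2p_1^2$ and $p_1^2p_2^2p_3^2$, of degrees $2,4,6$ respectively, so for every $r\geq 2$ the dimension of cubic-invariant homogeneous polynomials of degree $2r$ strictly exceeds that of degree $2r-2$; taking the difference (and using the decomposition $P_n=H_n\oplus |\p|^2 P_{n-2}$) yields a nonzero cubic-invariant harmonic polynomial $Y$ of degree $2r$, with $\int_{S^2}Y\,d\p=0$. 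Expanding $I(\rho_{\rm iso}+sY)$ to second order in $s$, the linear term vanishes, and by the Funk–Hecke formula $k*Y=\lambda_{2r}Y$ with $\lambda_{2r}=2\pi\int_{-1}^1k(t)P_{2r}(t)\,dt$, so the quadratic coefficient equals $\tfrac{1}{2}(4\pi\tau+\lambda_{2r})\|Y\|_{L^2}^2$, which is negative precisely under \eqref{27}. Hence $I(\rho_{\rm iso}+sY)<I(\rho_{\rm iso})$ for small $s>0$, forcing $\rho_*\neq\rho_{\rm iso}$. Finally, a nonconstant cubic-symmetric density cannot be axisymmetric: $\rho_*(\p)=f(\p\cdot\e)$ with $f$ nonconstant would force $\e$ to be a common $\pm 1$-eigenvector of every element of the cubic group, but the $90^\circ$ rotations about the three coordinate axes have no common fixed line.

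The main obstacle lies in Step 3: one must verify that a nonzero cubic-invariant harmonic of the prescribed degree $2r$ actually exists, so that condition \eqref{27} can be converted into an instability direction preserving the cubic symmetry (and thus living in the class over which we minimize).
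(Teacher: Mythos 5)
Your overall architecture is the same as the paper's: minimize $I$ over the cubic-symmetric class, show the minimizer $\rho_*$ is a genuine critical point, show $\rho_*\neq\rho_{\rm iso}$ via a negative second variation along a cubic-invariant spherical harmonic of degree $2r$, and then conclude non-axisymmetry because a nonconstant cubic-symmetric density cannot be axisymmetric. There are two genuine points of divergence. In Step 2 you keep the test functions $u$ cubic-symmetric and then argue $F$ is constant; the paper instead takes \emph{arbitrary} $u\in L^\infty(S_\ep)$ with $\int u=0$ and symmetrizes the competitor via $\rho'=\rho_c+s\sum_{i=1}^{48}u(\Q_i\cdot)$ on $S_\ep$, which after the change of variables yields $\int_{S_\ep}Fu=0$ for all such $u$ without restriction — slightly cleaner, as it plugs directly into Proposition~\ref{critprop}. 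In Step 3 you obtain the cubic-invariant harmonic of degree $2r$ by invariant-theoretic dimension counting (Hilbert series of $\R[e_1,e_2,e_3]$ with $e_j$ the elementary symmetric functions of $p_i^2$), whereas the paper writes one down explicitly, $u(\p)=P_{2r}(p_1)+P_{2r}(p_2)+P_{2r}(p_3)$, and proves it is nonzero for $r\geq 2$ by evaluating at $\e_1$ and using $|P_{2r}(0)|<\tfrac12$. Both routes are valid; the explicit construction has the advantage of being verifiable in one line, while your argument is more structural.

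A few steps are stated but not actually established. (a) The assertion that the dimension of degree-$2r$ cubic invariants strictly exceeds that of degree $2(r-1)$ for every $r\geq 2$ needs a word: the difference has generating function $\frac{1}{(1-s^2)(1-s^3)}$, and the coefficient of $s^r$ is the number of representations $r=2a+3b$ with $a,b\geq 0$, which is positive for all $r\geq 2$ (even $r$: $a=r/2,\,b=0$; odd $r\geq3$: $a=(r-3)/2,\,b=1$). (b) The final claim — that an axisymmetric $\rho_*(\p)=f(\p\cdot\e)$ with $f$ nonconstant forces $\e$ to be a common eigenvector of the whole cubic group — tacitly uses that the axis of a nonconstant axisymmetric continuous density is unique up to sign. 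This is not automatic; it is exactly the content of the paper's Proposition~\ref{axicubic} (a density axisymmetric about two nonparallel axes is constant), whose proof takes real work (or an appeal to the fact that two rotations about nonparallel axes generate a dense/full subgroup of $SO(3)$). You should prove or cite this. (c) You also need $\rho_*>0$ a.e.\ before passing to the $\ln\rho_*$ manipulations; the paper gets this by inserting the cubic-symmetric competitor $\rho'=\rho_{\rm iso}$ in the variational inequality, which you should include.
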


\begin{prop}
\lbl{axicubic}
If $\rho\in \mathcal P\cap C^0(S^2)$ is axisymmetric with respect to two nonparallel axes $\e,\bar\e\in S^2$ then $\rho(\p)=\frac{1}{4\pi}$ is constant. 
\end{prop}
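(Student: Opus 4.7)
The plan is to convert the two axisymmetry conditions into a single functional identity in Cartesian coordinates and exploit the extra degree of freedom along latitude circles about $\e$. First I would rotate coordinates so that $\e=(0,0,1)$ and $\bar\e=(\sin\alpha,0,\cos\alpha)$ with $\alpha\in(0,\pi)$; since $\e,\bar\e$ are nonparallel unit vectors, $\sin\alpha\neq 0$. The hypothesis then supplies continuous $f,g:[-1,1]\to\R$ with
$$\rho(\p)=f(p_3)=g(p_1\sin\alpha+p_3\cos\alpha)\qquad\text{for every }\p=(p_1,p_2,p_3)\in S^2.$$

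The key observation is that for each fixed $p_3\in(-1,1)$ the left-hand side $f(p_3)$ is constant as $\p$ runs along the latitude circle $\{p_1^2+p_2^2=1-p_3^2\}$, while $p_1$ sweeps the full interval $[-\sqrt{1-p_3^2},\sqrt{1-p_3^2}]$. Hence $g$ is identically $f(p_3)$ on the nondegenerate interval
$$I(p_3):=\bigl[p_3\cos\alpha-\sqrt{1-p_3^2}\,\sin\alpha,\; p_3\cos\alpha+\sqrt{1-p_3^2}\,\sin\alpha\bigr].$$

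To conclude I would observe that $I(p_3)$ varies continuously with $p_3\in(-1,1)$, so for $p_3'$ sufficiently close to $p_3$ the intervals $I(p_3)$ and $I(p_3')$ overlap; any $x$ in the overlap gives $f(p_3)=g(x)=f(p_3')$, whence $f$ is locally constant on $(-1,1)$, hence constant there, and by continuity constant on $[-1,1]$. Moreover $\bigcup_{p_3\in(-1,1)}I(p_3)=[-1,1]$, since for every $|x|<1$ the line $p_1\sin\alpha+p_3\cos\alpha=x$ meets the open unit disk of the $(p_1,p_3)$-plane in a chord of positive length, furnishing an admissible pair $(p_1,p_3)$. Therefore $g$, and so $\rho$, is constant on its domain, and the normalization $\int_{S^2}\rho\,d\p=1$ forces $\rho=1/(4\pi)$. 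I anticipate no serious obstacle; the only mildly delicate step is the overlap-and-cover argument for the family $\{I(p_3)\}$, which reduces to elementary planar geometry.
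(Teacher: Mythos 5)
Your proof is correct and follows essentially the same route as the paper's: after fixing coordinates, both arguments reduce the double-axisymmetry hypothesis to the identity $f(\p\cdot\e)=g(\p\cdot\bar\e)$ on $S^2$, note that holding one dot product fixed lets the other sweep a nondegenerate interval whose endpoints vary continuously, and then use the overlap of those intervals to deduce local constancy on $(-1,1)$ and hence constancy on $[-1,1]$. The only cosmetic difference is that you work directly with $p_3$ and $p_1\sin\alpha+p_3\cos\alpha$ rather than the paper's parametrization via the basis $\e,\bar\e,\e\wedge\bar\e$, and you make the overlap-and-cover step more explicit.
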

\begin{proof}
By assumption we have that  $\rho(\p)=f(\p\cdot\e)=g(\p\cdot\bar\e)$ for some continuous functions $f,g:[-1,1]\to\R$.  Setting $\lambda:=\e\cdot\bar\e$ we have that $|\lambda|<1$ and $|\e\wedge\bar\e|^2=1-\lambda^2$. Given $s,t\in[-1,1]$ we choose
$$\p=\alpha\e+\beta\bar \e+\gamma\e\wedge\bar\e,$$
where
$$\alpha=\frac{s-t\lambda}{1-\lambda^2},\;\;\beta=\frac{t-s\lambda}{1-\lambda^2}.$$
Then $\p\cdot\e=s$, $\p\cdot\bar\e=t$, and $\p\in S^2$ provided
$$\gamma^2(1-\lambda^2)^2=-(s^2+t^2-2\lambda st+\lambda^2-1)>0.$$
Hence given $t\in [-1,1]$ we have that $f(s)=g(t)$ for all $s$ satisfying
$$s^2+t^2-2\lambda st+\lambda^2-1<0,$$
so that $f(s)$ is constant in the interval 
$$(\lambda  t-\sqrt{(1-\lambda^2)(1-t^2)}, \lambda t+\sqrt{(1-\lambda^2)(1-t^2)}).$$
This interval depends continuously on $t$ and at $t=\lambda$ the right-hand endpoint is $1$, while at $t=-\lambda$ the left-hand endpoint is $-1$. Therefore $f$ is constant on $[-1,1]$, giving the result.
\end{proof}
\begin{rem}
\lbl{soph}
A more sophisticated proof is to use the fact that any rotation is a product of a finite number of rotations about $\e$ and $\bar\e$, as proved in \cite{lowenthal1971} (see also \cite{hamada}). Thus $\rho(\Rv\p)=\rho(\p)$ for all $\Rv\in SO(3)$, so that $\rho$ is constant.
\end{rem}
\begin{cor}
\lbl{cubicaxi}
If $\rho\in \mathcal P\cap C^0(S^2)$ is axisymmetric and has cubic symmetry then  $\rho(\p)=\frac{1}{4\pi}$ is constant. 
\end{cor}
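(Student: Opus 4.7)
The plan is to reduce Corollary~\ref{cubicaxi} to Proposition~\ref{axicubic} by exhibiting a second, nonparallel axis of symmetry. By axisymmetry, write $\rho(\p) = f(\p \cdot \e)$ for some $\e \in S^2$ and continuous $f \colon [-1,1] \to \R$. For every $\Q \in P^{48}$, cubic symmetry gives
\[
\rho(\p) = \rho(\Q\p) = f(\Q\p \cdot \e) = f(\p \cdot \Q^T \e),
\]
so $\rho$ is also axisymmetric with axis $\Q^T\e$. It therefore suffices to produce a single $\Q \in P^{48}$ with $\Q^T \e$ not parallel to $\e$; Proposition~\ref{axicubic} will then force $\rho \equiv 1/(4\pi)$.

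I would produce such a $\Q$ by a short case analysis on $\e = (e_1,e_2,e_3) \in S^2$. If some pair $|e_i|,|e_j|$ with $i\neq j$ is unequal, the signed permutation in $P^{48}$ that interchanges the $i$-th and $j$-th coordinate axes sends $\e$ to a vector whose $i$-th and $j$-th entries differ in absolute value from those of $\e$, hence not to $\pm\e$. In the remaining case $|e_1|=|e_2|=|e_3|=1/\sqrt{3}$, the sign-flip $\Q = \mathrm{diag}(-1,1,1) \in P^{48}$ sends $\e$ to a vector equal neither to $\e$ (which would force $e_1=0$) nor to $-\e$ (which would force $e_2=e_3=0$).

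There is no real obstacle: the only content beyond Proposition~\ref{axicubic} is the easy observation that the cubic group $P^{48}$ acts on $S^2$ with no globally invariant axis, which the case analysis verifies. An alternative, following Remark~\ref{soph}, would be to bypass Proposition~\ref{axicubic} entirely and invoke Lowenthal's theorem directly: once two nonparallel rotational symmetry axes of $\rho$ are in hand, rotations about them generate all of $SO(3)$, so $\rho$ is $SO(3)$-invariant and hence constant.
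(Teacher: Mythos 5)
Your proof is correct and takes essentially the same route as the paper: reduce to Proposition~\ref{axicubic} by noting that cubic symmetry makes $\rho$ axisymmetric about each $\Q^T\e$, then observe that not all of these can be parallel to $\e$. The only difference is that you carry out an explicit case analysis to produce a concrete $\Q$ with $\Q^T\e \nparallel \e$, whereas the paper disposes of this point with a one-line observation about rotation axes; your version is if anything slightly more careful, but the argument is the same.
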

\begin{proof}
If $\rho$ is axisymmetric with axis $\e$ and has cubic symmetry then we have that 
$$\rho(\Q_i\p)=f(\Q_i\p\cdot\e)=f(\p\cdot\Q_i^T\e)\text{  for all }\Q_i\in P^{48}.$$
But if $\Q_i^T\e$ were parallel to $\e$ for all $i$ this would imply that $\Q_i^2\in P^{24}:=P^{48}\cap SO(3)$ has the same axis of rotation $\e$ for each $i$, which is not the case. Hence $\rho$ is axisymmetric with respect to two nonparallel axes of rotation.
\end{proof}

\begin{proof}[Proof of Theorem \ref{nonaxithm}]
The strategy of the proof is to show:\vspace{.03in}

\noindent (i)   that $I$ attains a minimum $\rho_c$ among probability densities having cubic symmetry,\vspace{.03in}

\noindent (ii)   that $\rho_c$ is a critical point of $I$; this is an example of the {\it principle of symmetric criticality} \cite{palais}, but for technical reasons and simplicity we give a direct proof,\vspace{.03in}

\noindent (iii)  that there is a $(2r)^{\rm th}$ order spherical harmonic $u$ having cubic symmetry, and  via the Funk-Hecke Theorem (see \eqref{funk} below) that the second variation  $\delta^2I(u,u)$ at the isotropic state $\rho_{\rm iso}(\p)=\frac{1}{4\pi}$ is negative for some $r\geq 2$ if \eqref{27} holds, so that $\rho_c$ is not isotropic,\vspace{.03in}

\noindent (iv)  that hence by Corollary \ref{cubicaxi}   $\rho_c$ is not axisymmetric.\smallskip

\noindent {\it Step} (i).  
The set
$${\mathcal A}:=\{\rho\in\mathcal P: \rho \text{ has cubic symmetry}\}$$
is nonempty (since $\rho_{\rm iso}\in \mathcal A$) and weakly closed in $L^1(S^2)$. Let $\rho^{(j)}$ be a minimizing sequence for $I$ in $\mathcal A$. By the de la Vall\'ee Poussin criterion \cite[Chapter II]{DellacherieMeyer} there therefore exists a subsequence, not relabelled, such that $\rho^{(j)}\weak \rho_c$ in $L^1(S^2)$ for some $\rho_c\in\mathcal A$. Then the convexity of $\rho\ln\rho$ and the weak continuity of the interaction term imply that $\rho_c$ is a minimizer.\smallskip

\noindent{\it Step }(ii). We proceed as in the proof of Proposition \ref{critprop}. The same argument as there shows that 
\be 
\label{jb5}
\int_{S^2}[\tau(1+\ln \rho_c)+(k*\rho_c)](\rho'- \rho_c)\,d\p\geq 0,
\ee
for any $\rho'\in\mathcal A$. Choosing   $\rho'=\rho_{\rm iso}$ we deduce that
$\rho_c(\p)>0$ for a.e. $\p\in S^2$. 

For $\ep>0$ define 
$$S_\ep=\{\p\in S^2:\ep<\rho_c(\p)<\ep^{-1}\}.$$
Let $u\in L^\infty(S_\ep)$ with $\int_{S_\ep}u(\p)\,d\p=0$ and define
\be 
\nonumber
\rho'(\p):=\left\{\begin{array}{ll}\rho_c(\p)+s\displaystyle\sum_{i=1}^{48}u(\Q_i\p),& \p\in S_\ep,\\
\rho_c(\p),&\p\not\in S_\ep,
\end{array}\right.
\ee
for $|s|$ sufficiently small. Then, from \eqref{jb5} we have that 
\be 
\lbl{30}
\int_{S_\ep}[\tau(1+\ln \rho_c(\p) )+(k*\rho_c)(\p)]\sum_{i=1}^{48}u(\Q_i\p)\,d\p= 0.
\ee
Making the change of variables $\q\to\Q_i\p$ and using the cubic invariance of $\rho_c$ we deduce from \eqref{30} that 
$$\int_{S_\ep}[\tau(1+\ln \rho_c(\p) )+(k*\rho_c)(\p)]u(\p)\,d\p= 0,$$
and thus deduce as before that $\rho_c$ is a critical point.\smallskip

\noindent  {\it Step }(iii). The study of spherical harmonics with cubic symmetry is a classical topic (see, for example, \cite{vanderlagebethe}). An example of a $(2r)^{th}$ order (not normalized) spherical harmonic with cubic symmetry is 
\be 
\lbl{harmonic}
u(\p)=P_{2r}(p_1)+P_{2r}(p_2)+P_{2r}(p_3),\;\;\p\in S^2,
\ee
whose cubic symmetry is obvious since $P_{2r}(t)$ is even in $t$, and which is a linear combination of $(2r)^{\rm th}$ order spherical harmonics because in spherical polar coordinates $Y^0_{2r}(\theta,\varphi)=\sqrt\frac{4r+1}{4\pi} P_{2r}(\cos \theta)$. Note that since $P_2(t)=\half(3t^2-1)$ we have that $u=0$ for $r=1$. To show that $u$ is nonzero for $r>1$ we take $\p=\e_1=(1,0,0)$, so that $u(\e_1)=1+2P_{2r}(0)$. But  
$$
P_{2r}(0)=\frac{(-1)^r(2r)!}{2^{2r}(r!)^2},
$$
so that
\be 
\lbl{31}
\left|\frac{P_{2(r+1)}(0)}{P_{2r}(0)}\right|=\left|\frac{2r+1}{2r+2}\right|<1.
\ee
Hence $|P_{2r}(0)|<|P_2(0)|=\half$ and thus $u(\e_1)>0$.

Since the integral of any nonconstant spherical harmonic over $S^2$ is zero, we have that $\int_{S^2}u(\p)\,d\p=0$. The Funk-Hecke Theorem (see, for example, \cite[Theorem 2.22]{atkinsonhan}) implies that for any $l^{\rm th}$ order spherical harmonic $Y^m_l$
\be 
\lbl{funk}
\int_{S^2}k(\p\cdot\q)Y^m_l(\q)\,d\q=2\pi\lambda_lY^m_l(\p),\;\; \p\in S^2,
\ee
where 
$\lambda_l:=\int_{-1}^1k(t)P_l(t)\,dt$ and $P_l(t)$ is the $l^{\rm th}$ Legendre polynomial. Thus we have that for $r\geq 2$
\be 
\lbl{32}
\int_{S^2}k(\p\cdot\q)u(\q)\,d\q=2\pi\lambda_{2r}u(\p), \;\;\p\in S^2.
\ee
But
\be 
\lbl{33}
\delta^2I(\rho_{\rm iso})(u,u)&:=&\left.\frac{d^2}{dt^2}I(\rho_{\rm iso}+tu)\right|_{t=0}\\
&=&2\pi(2\tau+\lambda_{2r})\int_{S^2}u(\p)^2\,d\p,\nonumber
\ee
so that if $\lambda_{2r}<-2\tau$ for some $r\geq 2$ then $\rho_c\neq \rho_{\rm iso}$.\smallskip

\noindent {\it Step} (iv).   By Corollary \ref{cubicaxi}, if $\rho_c$ were axisymmetric we would have $\rho_c=\rho_{\rm iso}$, which is impossible by Step (iii).

\end{proof}
\begin{rem}
\lbl{notwork}
Note that Theorem \ref{nonaxithm} does not give any information for the coupled dipolar/Maier-Saupe interaction, because both $t$ and $t^2$ are orthogonal to $P_{2r}(t)$ in $L^2(-1,1)$ for $r\geq 2$.
\end{rem}
As an example to which Theorem \ref{nonaxithm} applies we consider the Onsager potential $k(t)=\tau\sqrt{1-t^2}$. Writing 
$$\theta_r:=\int_{-1}^1 \sqrt{1-t^2}P_{2r}(t)\,dt,$$
we have, using the recurrence relation
$$(2r+2)P_{2r+2}(t)=(4r+3)tP_{2r+1}(t)-(2r+1)P_{2r}(t),
$$ that
\be 
\lbl{36}
2(r+1)\theta_{r+1}(t)=(4r+3)\int_{-1}^1t\sqrt{1-t^2}P_{2r+1}(t)\,dt-(2r+1)\theta_r.
\ee
Since
$$\frac{t^2-1}{2r+1}\frac{d}{dt}P_{2r+1}(t)=tP_{2r+1}(t)-P_{2r}(t),$$
we have that 
\be 
\lbl{37}2(r+2)\int_{-1}^1t\sqrt{1-t^2}\,P_{2r+1}(t)\,dt=(2r+1)\theta_r,
\ee
so that combining \eqref{36}, \eqref{37} we obtain
$$\frac{\theta_{r+1}}{\theta_r}=\frac{(2r+1)(2r-1)}{4(r+1)(r+2)}.$$
Since $\theta_0=\frac{\pi}{2}$, we have that $\theta_1=-\frac{\pi}{16}$ and
$$\theta_2=-2^{-7}\pi<\theta_r \text{ for all }r>2.$$
Therefore it is best to choose $r=2$ in Theorem \ref{nonaxithm}, and we deduce that there is a non-axisymmetric critical point with cubic symmetry provided $$\tau<2^{-8}\pi\kappa.$$

Vollmer \cite{vollmer2016} studies the bifurcation from the various eigenvalues of the Euler-Lagrange equation linearized around $\rho_{\rm iso}$
$$\tau=-\half \theta_r\kappa= \frac{((2r)!)^2\pi\kappa}{2^{4r+2}(2r-1)(r+1)(r!)^4},$$ (given by her in an equivalent form in \cite[Theorem 1]{vollmer2016}), showing that the bifurcation for $r=1$ is to axisymmetric critical points. A natural conjecture is that for $r>1$ there is bifurcation to non-axisymmetric critical points, which could probably be proved by working in a space of functions invariant with respect to suitable subgroups of $O(3)$, but we do not pursue this here.

The above examples leave open the question of whether for general kernels, and in particular for the Onsager kernel, local or global minimizers of $I$ are axisymmetric. In similar problems (see, for example, \cite{j22}) non-axisymmetric minimizers can arise from secondary bifurcations from a primary branch of axisymmetric solutions.
 
\section*{Acknowledgements}
\noindent I would like to thank the Isaac Newton Institute for Mathematical Sciences for support and hospitality during the 2019 programme {\it The Mathematical Design of New Materials}, during which I was a Simons Fellow, and    when  the research in this paper was begun. This work was supported by EPSRC grant number EP/R014604/1. I am  grateful to Ibrahim Fatkullin, Valeriy Slastikov, Pingwen Zhang, Hailiang Liu, Hui Zhang, Duvan Henao, John Toland, Michaela Vollmer and Paul Garrett for their interest and discussion.
\bibliography{gen2,balljourn,ballconfproc,ballprep}
\bibliographystyle{abbrv}

\end{document}